\documentclass[aps,prl,twocolumn,superscriptaddress,floatfix,letter,longbibliography]{revtex4-2}

% The line below is to have numbers for the appendix
\setcounter{secnumdepth}{2}

% figure placement (for option H in the figure)
\usepackage{float}

\usepackage[T1]{fontenc}
\usepackage[utf8]{inputenc}
\usepackage{url}
\usepackage{amsmath}
\usepackage{amsthm}
\usepackage{amssymb}
\usepackage{mathtools}
\usepackage{bbold}
\usepackage{mathrsfs}

\usepackage{graphicx}
\usepackage{epstopdf} 
\usepackage{wrapfig}
\usepackage{braket}
\usepackage{xcolor}
\usepackage{comment}
\usepackage{lipsum}
\usepackage{hyperref}
\usepackage{dsfont}

\newcommand\eref[1]{Eq.~\ref{#1}}
\newcommand\aref[1]{Appendix~\ref{#1}}
\newcommand\fref[1]{Fig.~\ref{#1}}

\newcommand{\sx}{\sigma^{\rm x}}

\newcommand{\sz}{\sigma^{\rm z}}
\global\long\def\ii{\mathrm{i}}%
\newcommand{\im}{\mathrm{i}}

\newtheorem{theorem}{Theorem}

% Change to black to ignore the color

\begin{document}

\title{Variational ground-state quantum adiabatic theorem}

\author{Bojan \v{Z}unkovi\v{c}}
\email{bojan.zunkovic@fri.uni-lj.si}
\affiliation{
University of Ljubljana,  
Faculty of Computer and Information Science, Ve\v{c}na pot 113, 1000 Ljubljana, Slovenia
}
% Bojan: You can change the order of authors from here on as you wish :) Please check the affiliations
% Pietro: The order you proposed looks reasonable to me
\author{Pietro Torta}
\affiliation{Dipartimento di Fisica, Università degli Studi di Milano, Via Celoria 16, 20133 Milano, Italy}
\affiliation{SISSA, Via Bonomea 265, I-34136 Trieste, Italy}
\author{Giovanni Pecci}
\affiliation{CNR-IOM, Via Bonomea 265, I-34136 Trieste, Italy}
\author{Guglielmo Lami}
\affiliation{Laboratoire de Physique Th\'eorique et Mod\'elisation, CNRS UMR 8089,
CY Cergy Paris Universit\'e, 95302 Cergy-Pontoise Cedex, France}
\author{Mario Collura}
\affiliation{SISSA, Via Bonomea 265, I-34136 Trieste, Italy}
\affiliation{INFN Sezione di Trieste, via Bonomea 265, I-34136 Trieste, Italy}

\begin{abstract} 
We present a variational quantum adiabatic theorem, which states that, under certain assumptions, the adiabatic dynamics projected onto a variational manifold follow the instantaneous variational ground state. We focus on low-entanglement variational manifolds and target Hamiltonians with classical ground states.
Despite the presence of highly entangled intermediate states along the exact quantum annealing path, the variational evolution converges to the target ground state. We demonstrate this approach with several examples that align with our theoretical analysis.
\end{abstract}

\maketitle

\paragraph{Introduction.}
Adiabatic quantum computation (AQC) is a computational paradigm relying on a slow continuous change of an initial Hamiltonian $H_0$ to a final Hamiltonian $H_1$ over a total annealing time $T$. The adiabatic theorem~\cite{albash2018adiabatic} ensures that if the initial state is the ground state of $H_0$ and the evolution is sufficiently slow, the final state approximates the ground state of $H_1$. By strategically selecting $H_1$, we can ensure that its ground state serves as a solution to a specific computational problem.  
Generally, AQC is polynomially equivalent to gate-based quantum computation and thereby universal~\cite{aharonov2008adiabatic,mizel2007simple}.
%This quantum adiabatic computing paradigm mirrors the functionality of the more conventional circuit-based quantum computation. 
A relevant subclass of AQC protocols using stoquastic Hamiltonians addresses classical optimization problems and has been proposed as a quantum version of thermal annealing, with quantum fluctuations replacing thermal fluctuations~\cite{apolloni1989quantum}.
This Quantum Annealing (QA) formulation is compatible with hardware platforms~\cite{Johnson_Nat11} and holds significant promise for technological applications~\cite{QA_industry}.

Classical simulation of QA relies on simulated quantum annealing based on the path-integral Monte Carlo~\cite{crosson2016simulated} sampling. However, the actual need for quantum resources and the classical simulability of quantum protocols~\cite{cerezo2023does} remains largely unanswered, with the potential to lead to effective quantum-inspired algorithms~\cite{crowley2014quantum, QFT-low-ent, han2018unsupervised, vzunkovivc2022deep, vzunkovivc2023positive}, often based on tensor network techniques~\cite{haegeman2014geometry,biamonte2017tensor,huggins2019towards}.

We take a step in this direction by showing that an effective evolution constrained in a low-entanglement manifold can faithfully replace a fully quantum adiabatic evolution if the initial and final (target) states are classical.
Our approach helps to elucidate the role of entanglement in quantum annealing.
Recent numerical works~\cite{lami2023quantum,sreedhar2022quantum} found that, under certain conditions, restricting the accessible Hilbert space to a variational manifold with low entanglement might even increase the probability of finding the target state, potentially acting as a regularization technique.
Indeed, while entanglement is necessary for general quantum computation, its role in quantum optimization is unclear. 
A similar observation has been made in a related optimization strategy, namely the quantum approximate optimization algorithm (QAOA)~\cite{chen2022much, PhysRevA.106.022423}. 

The rest of the paper presents the physical intuition behind the variation ground-state quantum adiabatic evolution and several examples.

\begin{figure}[!t]
    \centering
    \includegraphics[width=0.85\columnwidth]{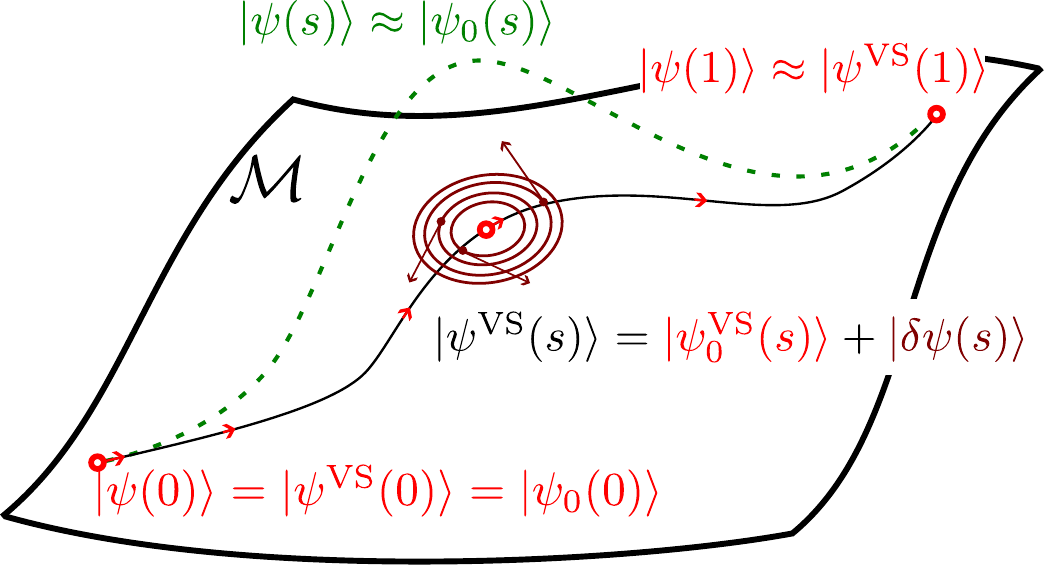}
    \caption{Schematic representation of the variational ground-state quantum adiabatic theorem. The dashed green line represents the exact quantum adiabatic evolution $\ket{\psi(s)} \approx \ket{\psi_0(s)}$, visiting high-entangled regions. Solid black line corresponds to the instantaneous variational ground state $\ket{\psi^{\rm VS}_0(s)}$, with the small red arrows indicating that it changes slowly with time. The time-dependent variational state $\ket{\psi^{\rm VS}(s)}$ is separated into $\ket{\psi^{\rm VS}_0(s)}$ and their difference $\ket{\delta\psi(s)}$. The latter can be decomposed in the eigenbasis of the linearized evolution map, whose eigenvalues $\omega_l$ determine the frequencies of the elliptic trajectories around the instantaneous variational ground state. Since $1/T \ll \omega_l$ (denoted by long, dark-red arrows), the effect of $\ket{\delta\psi(s)}$ averages to zero, and the time-dependent variational state $\ket{\psi^{\rm VS}(s)}$ follows the instantaneous variational ground state $\ket{\psi^{\rm VS}_0(s)}.$ 
    When the target ground state of the final Hamiltonian lies within the variational manifold, the time-dependent variational state will converge to it as the annealing time $T$ becomes large.}
    \label{fig: manifold volution}
\end{figure}

\paragraph{Variational ground-state quantum adiabatic theorem.}

Consider a parameterized Hamiltonian $H(s)$ interpolating between the initial Hamiltonian $H(0)=H_0$, and the final \textit{target} Hamiltonian $H(1)=H_1$.
Let us denote the duration of the protocol (annealing time) as $T$ and interpret $s=t/T$ as the rescaled time.
The evolution of a quantum state is then given by 
\begin{align}
    \frac{{\rm d}}{{\rm d}s}\ket{\psi(s)} = -{\rm i} T H(s)\,.
\end{align}
By initializing the system in the ground state of $H_0$, and if the annealing time $T$ is sufficiently large (i.e., under the quantum adiabatic condition~\cite{albash2018adiabatic}), the time-evolved state $\ket{\psi(s)}$ remains close to the ground state $\ket{\psi_0(s)}$ of the instantaneous Hamiltonian $H(s)$.
Since the exact description of $\ket{\psi(s)}$ is often challenging, we may approximate its real-time evolution with a variational state $\ket{\psi^{\rm VS}(s)}$ constrained to a manifold $\mathcal{M}$, evolved with the Lagrangian action variational principle~\cite{hackl2020geometry}.
Let us assume that the initial state, namely $\ket{\psi(0)}=\ket{\psi_0(0)}$,
lies within the variational manifold $\mathcal{M}$. Thus, the exact and variational dynamics share the same initial condition: $\ket{\psi(0)} = \ket{\psi^{\rm VS}(0)} = \ket{\psi_0(0)}$. From now on, we introduce the notation $\ket{\psi^{\rm VS}_0(s)}$ for the instantaneous variational ground state.
Under certain hypotheses detailed in \aref{app: vqa},
the time-dependent variational state $\ket{\psi^{\rm VS}(s)}$ remains close to the instantaneous variational ground state, namely $\ket{\psi^{\rm VS}(s)}=\ket{\psi^{\rm VS}_0(s)}+\ket{\delta\psi(s)}$, where $\lVert\delta\psi(s)\rVert_2=\mathcal{O}(1/T)$, for $s \in [0,1]$.
If the target ground state belongs to the manifold, i.e.\ $\ket{\psi_0(1)}=\ket{\psi^{\rm VS}_0(1)}$, then the \emph{variational} state converges to the target ground state at $s=1$, for large values of $T$.
In practice, we shall consider a low-entanglement manifold and classical initial and final (target) states.

A more precise statement of the theorem is given in \aref{app: vqa}. Here, we provide only the physical intuition illustrated in \fref{fig: manifold volution}. If $\lVert\delta\psi(s)\rVert_2$ is small, its evolution is determined by the linearized time-dependent variational equations that give rise to elliptic trajectories around the instantaneous variational ground state $\ket{\psi^{\rm VS}_0(s)}$. The frequencies $\omega_l>0$ of the trajectories are determined by the eigenvalues of the linearized evolution map on the manifold $\mathcal{M}$~\cite{hackl2020geometry} and represent approximate energy gaps over the ground state. If $1/T\ll \omega_l$, then $\ket{\delta\psi(s)}$ will average to zero in a time window where $\ket{\psi^{\rm VS}_0(s)}$ does not change drastically. Therefore, $\ket{\psi^{\rm VS}(s)}$ will follow the instantaneous variational ground state $\ket{\psi^{\rm VS}_0(s)}$ and $\lVert\delta\psi(s)\rVert_2$ will remain small during the variational time evolution.

In the following, we consider several examples of increasing complexity, with protocols in the form
\begin{align}
\label{eq: protocol}
        H(s) &= (1-s)H_0+sH_1+s(1-s)H_2, \\ \nonumber
        H_0 & = -\sum_{j=1}^N\sx_j,
\end{align}
where we may introduce an additional \textit{catalyst} Hamiltonian $H_2$. In all cases, we initialize the system in the ground state of $H_0$ and evaluate the convergence of the variational evolution to the target ground state of $H_1$ in a total time $T$.

\paragraph{Two-qubit model.}
First, we consider a two-qubit model with
\begin{align}\label{eq:2-qubit H}
    H_1 & = \sz_1\sz_2-2(\sz_1+\sz_2), \\ \nonumber
    H_2 & = -2A \ket{\Phi}\bra{\Phi},
\end{align}
where $\ket{\Phi}=\frac{1}{\sqrt{2}}(\ket{00}+\ket{11})$ is a maximally entangled Bell state, and $A$ is a positive constant. The initial and the final ground states are product states where both spins point in the $x$ and $z$ directions, respectively. Therefore, we choose the following product state variational ansatz 
\begin{equation}
    \begin{aligned}
    \label{eq: 2-qubit psi}
    \ket{\psi^{\rm VS}_{\theta,\phi}} &= \ket{\psi_{\theta,\phi}}\otimes \ket{\psi_{\theta,\phi}}, \\ 
    \ket{\psi_{\theta,\phi}} &= \cos(\theta/2)\ket{0} + \sin(\theta/2)\mathrm{e}^{-{\im}\phi}\ket{1}.
    \end{aligned}    
\end{equation}
The variational manifold given by \eref{eq: 2-qubit psi} is a K\"ahler manifold (see \aref{app: 2-partite model} or \cite{hackl2020geometry}). The parameter $A$ determines the entanglement of the ground states of $H(s)$ at intermediate times $0<s<1$. In \fref{fig: 2-qubit scaling}, we show that cases with higher ground-state entanglement at $s=0.5$ correspond to faster convergence to the variational ground state at the end of the protocol (see \fref{fig: 2-qubit scaling}~a),~b)). 
This example demonstrates that the variational adiabatic evolution can still reach the target ground state, even if the exact quantum dynamics would access high-entanglement regions at intermediate times. 
\fref{fig: 2-qubit scaling}~c) shows the scaling of $\lVert \delta\psi(1)\rVert_2 = \mathcal{O}(1/T)$, which is verified regardless of the value of $A$, as predicted by the variational ground-state adiabatic theorem.   
This two-qubit model can be extended to a more general bipartite system involving arbitrarily high entanglement at the intermediate ground states of $H(s)$ (see \aref{app: 2-partite model}). Nonetheless, the variational evolution on a product-state manifold
%remains the same 
is equivalent to the two-qubit case and correctly leads to the target ground state. 

\begin{figure}[!t]
    \centering
    \includegraphics[width=\columnwidth]{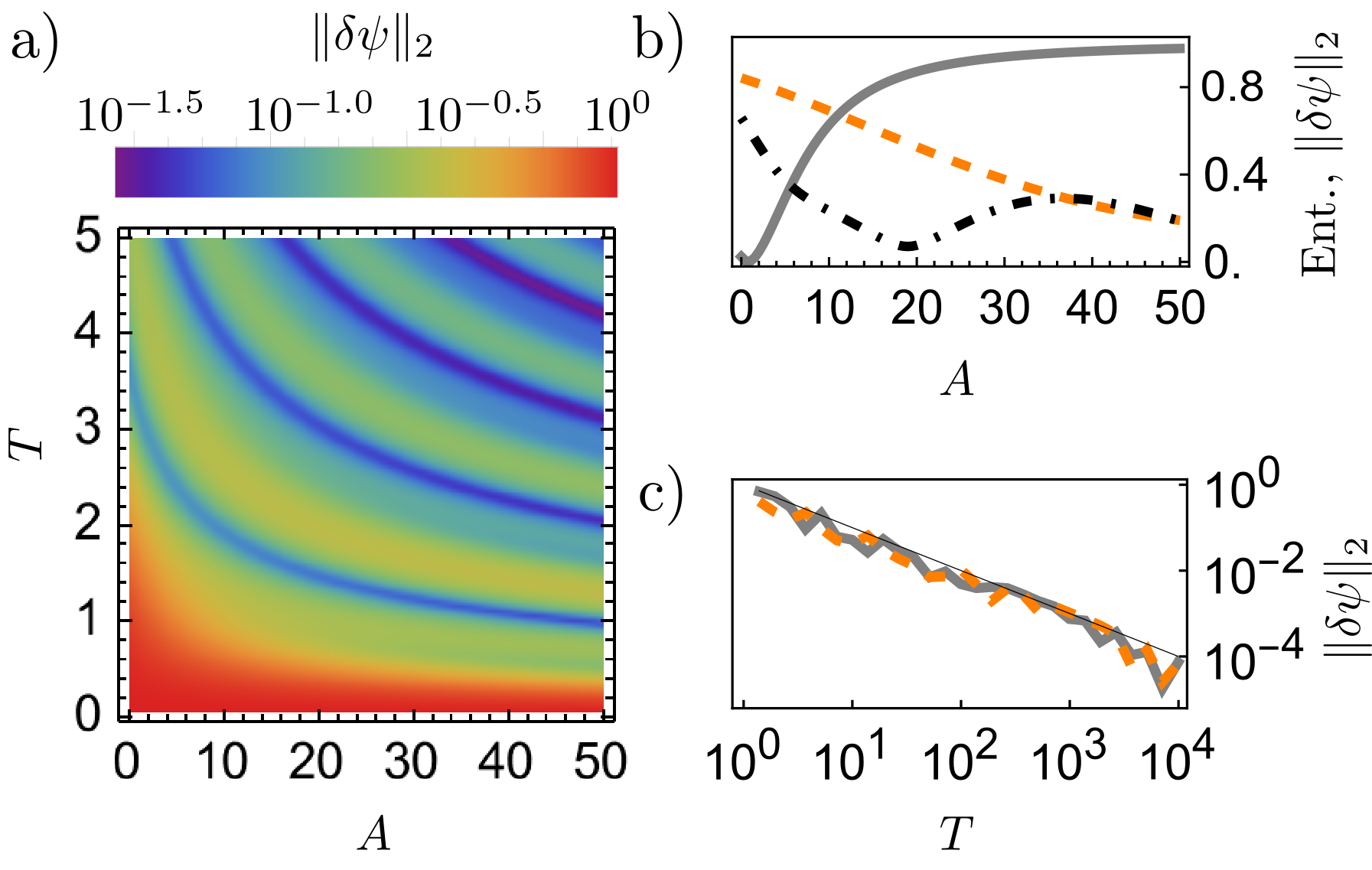}
    \caption{\textit{a)} The final norm $\lVert\delta\psi(s)\rVert_2$ at $s=1$, a measure of the distance between $\ket{\psi_0(1)}$ (the target ground state of $H_1$) and $\ket{\psi^{\rm VS}(1)}$ (the variational state at the end of the annealing). 
    We plot it as a function of the annealing time $T$ and parameter $A$. 
    \textit{b)} Half-system entanglement entropy at intermediate time $s=0.5$ (grey line) as a function of the parameter $A$, and the norm $\lVert \delta\psi(1)\rVert_2$ for $T=0.5$ --- dashed orange, and $T=1.5$ --- dash-dotted black. Higher entanglement entropy corresponds to smaller values of $\lVert \delta\psi (1)\rVert_2$. 
    \textit{c)} Scaling of the norm $\lVert \delta\psi(1)\rVert_2$ with annealing time $T$. The thin line corresponds to $T^{-1}$ scaling, the dashed orange line to $A=0$, and the grey line to $A=5$. 
    }
    \label{fig: 2-qubit scaling}
\end{figure}

\paragraph{Lipkin-Meshkov-Glick model.}
Next, we consider the integrable Lipkin-Meshkov-Glick (LMG)~\cite{lipkin1965validity} 
\begin{equation}
    H_1^{\text{LMG}} = -  \frac{1}{N} \Biggl(\sum_{i=1}^N \sigma_i^z \Biggr)^2 \, 
\end{equation}
as target Hamiltonian. Similarly to the two-qubit case, we have homogeneous product state ground states of the initial and final Hamiltonian pointing in the $x$ and $z$ directions, respectively. Therefore, we consider the following mean-field ansatz
\begin{equation}
    \ket{\psi^{\rm VS}_{\theta,\phi}}= \ket{\psi_{\theta,\phi}}^{\otimes N},
\end{equation}
with $\ket{\psi_{\theta,\phi}}$ defined as in \eref{eq: 2-qubit psi}. The variational manifold is again a K\"ahler manifold with simple equations of motion for the variational parameters that can be solved numerically for any $N$ (see \aref{app: LMG}). %\commentpt{Restricting the dynamics in the maximum total spin sector,} 
This manifold can be effectively represented as a sphere by considering the expectation value of total spin $S$, with components
\begin{align}
    S^\alpha = 
    \frac{1}{N} \sum_{i=1}^N
    \braket{\psi_{\theta,\phi}|\sigma_i^{\rm \alpha}|\psi_{\theta,\phi}}, \quad \alpha={\rm x,y,z}.
\end{align}
The LMG model crosses the second-order quantum phase transition at $s^*=1/3$ in the thermodynamic limit. In this limit, states on the manifold exactly describe the dynamics and some properties of the eigenstates~\cite{granet2023exact,vzunkovivc2023mean,vzunkovivc2024mean}, and the system behaves as a classical spin on the sphere~\cite{lipkin1965validity}.
At finite $N$, the instantaneous quantum many-body ground state does not belong to the manifold, except at $s=0,1$. 
However, the time-dependent variational state and the instantaneous variational ground state are still represented on the sphere. If the norm of their difference $  \lVert\delta S (s)\rVert_2 = \lVert S^{\rm\, VS}(s)-S_0^{\rm VS}(s) \rVert_2$ converges to small values at $s=1$, the variational evolution approximates the exact target ground state.

As shown in \aref{app: LMG}, the finite $N$ equations of motion on the manifold retain the same structure as in the thermodynamic limit and display a mean-field second-order phase transition at $s^*=\frac{N}{3N-2}$.
%
% For $s<s^*$, the usual scaling is recovered;
For $s<s^*$, the instantaneous variational ground state does not depend on $s$, hence, for any value of $T$, $\lVert\delta S (s)\rVert_2$ oscillates close to its initial value at $s=0$;
for $s>s^*$, the convergence of the time-dependent variational state to the instantaneous variational ground state scales as $ \lVert\delta S (s)\rVert_2 \sim 1/\sqrt{T}$.
Indeed, our assumptions in the variational ground-state quantum adiabatic theorem are not strictly valid in this regime, as $\lVert \partial_s S_0^{\rm VS}(s)\rVert$ exhibits a square root divergence at the critical time $s^*$, leading to a change in the exponent (see \aref{app: LMG}).

In \fref{fig: LMG}, we present a specific case with $N=4$.
Although we deliberately focus on a small system size, distant from the mean-field approximation, the same results apply to any value of $N$.
We start close to the $H_0$ ground state, with the initial condition shifted $10^{-4}$ away from the angles $\theta(0)=\pi/2$, $\phi_0=0$, which determine a fixed point for all $s$. In \fref{fig: LMG}~a) we show a trajectory of the vector $S^{\rm\, VS}(s)$ (black line) for $T=500$. We observe fast oscillations around the variational ground-state vector $S_0(s)$ (dashed orange line). These oscillations become faster and the amplitude smaller upon increasing $T$, as displayed in \fref{fig: LMG}~c), where we plot $\delta S(s)$ at $s \in [0,1]$ for three values of $T$. Consistently, \fref{fig: LMG}~b) shows the square root dependence of $\lVert \delta S\rVert_2$ on $T$, for $s>s^*$.
The jump at $s^*=0.4$ is related to the mean-field phase transition (see \aref{app: LMG}).
Finally, \fref{fig: LMG}~d) corroborates the agreement of numerical data with the square root scaling, verifying the convergence of the final variational state $(s=1)$ to the target ground state.
% we show the square root convergence of the final variational state to the variational ground state with annealing time $T$.
\begin{figure}[!htb]
    \centering
    \includegraphics[width=\columnwidth]{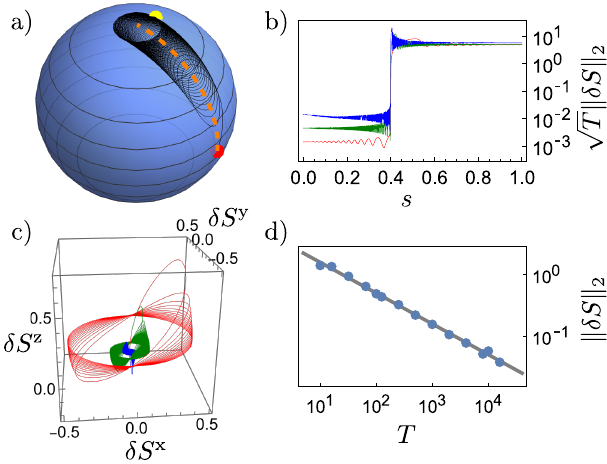}
    \caption{\textit{a)} An example evolution of the vector $S^{\rm\, VS}(s)$ (black line) for $N=4$ and $T=500$, $s \in [0,1]$. The red dot corresponds to the initial condition, and the yellow dot corresponds to the final variational state $S^{\rm\, VS}(1)$. The dashed orange line shows the \emph{variational} ground state vector $S_0^{\rm VS}(s)$. 
    \textit{b)} Three examples of the evolution of the rescaled norm $\sqrt{T}\lVert \delta S (s)\rVert_2$ for $T=10^2$ (red), $T=10^3$ (green), and  $T=10^4$ (blue).
    \textit{c)} Convergence of $\delta S (s)$ during the variational time evolution
    %between the time-dependent variational state vector $S^{\rm\, VS}(s)$ and the variational ground-state vector $S_0(s)$ 
    --- same data as in panel \textit{b)}.
    \textit{d)} Scaling of 
    $\lVert\delta S (1)\rVert_2$: the solid line represent the predicted $1/\sqrt{T}$, the dots correspond to numerical results.}
    \label{fig: LMG}
\end{figure}

\paragraph{Ising spin glass.} 
Finally, we consider the fully connected random Ising spin glass model with the Hamiltonian
\begin{equation}\label{eq: SG}
    H_1^{\rm SG} = - \sum_{i,j=1}^N J_{ij} \sigma_i^z \sigma_j^z + \sum_{i=1}^N h_i \sigma_i^z\,,
\end{equation}
where $J_{ij} = J_{ji}$ and $h_i$ are random numbers uniformly distributed in the range $(0,1)$ and $(-0.5,0.5)$, respectively. Determining the ground state of the model is computationally equivalent to quadratic unconstrained binary optimization (QUBO) and is an NP-complete problem~\cite{barahona1982computational}. 
%Therefore, we can perform exact computations only on small system sizes. 

Here, we use the matrix product state (MPS) variational ansatz, which enables a systematic increase of maximal entanglement by increasing the bond dimension $D$. The MPS manifold is a K\"ahler manifold~\cite{haegeman2014geometry} that enables an inverse-free implementation of the time-dependent variational principle (TDVP)~\cite{haegeman2016unifying} and a simple evaluation of the gap~(see \aref{app: spin glass}).
\begin{figure}[!htb]
    \centering
    \includegraphics[width=\columnwidth]{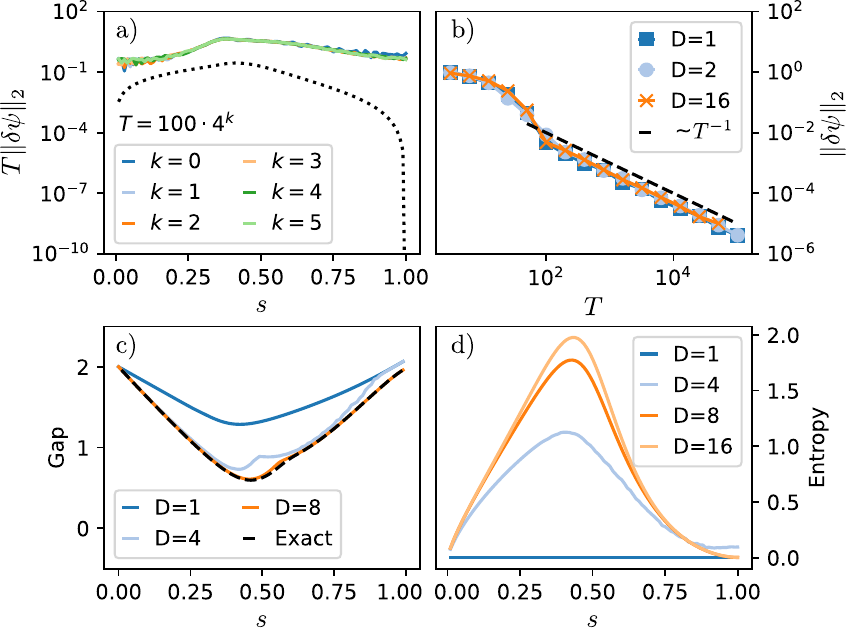}
    \caption{\textit{a)} Convergence of the time-dependent variational state towards the instantaneous variational ground state upon increasing the annealing time $T$ for a particular realization of the Ising spin glass model with $N=8$. We fix the bond dimension to $D=1$ and plot $\lVert\delta\psi(s)\rVert_2$. The black dotted line shows the norm of the difference between the exact instantaneous ground state and the instantaneous variational ground state. 
    \textit{b)} Norm of the difference between the final variational state at $s=1$, which is $\ket{\psi^{\rm VS}(1)}$, and the exact target ground state of $H_1$, i.e.\ $\ket{\psi_0(1)}$, for increasing values of $D$.
    Notice that the target state belongs to the MPS manifold with $D=1$ (dotted black line in panel~\textit{a}, at $s=1$).
    The bond dimension $D=16$ corresponds to the exact evolution, and the dashed line highlights the predicted scaling $1/T$.
    \textit{c)} The spectral gap calculated from the effective Hamiltonian in the MPS-TDVP evolution for various bond dimensions $D$. The dashed line corresponds to the exact spectral gap. We set $T=3200$.
    \textit{d)} Entanglement entropy during the protocol with $T=3200$ and different bond dimensions.
    }
    \label{fig: ising scaling}
\end{figure}

In \fref{fig: ising scaling}~a) we show that the time-dependent variational state $\ket{\psi^{\rm VS}}$ with $D=1$ converges uniformly, i.e.\ for all $0<s\leq1$, towards the instantaneous variational ground state with the same bond dimension. The expected $1/T$ convergence remains valid upon increasing the bond dimension, as shown in \fref{fig: ising scaling}~b). 
As predicted by our theorem, we obtain the target ground state even by constraining the variational adiabatic evolution to a low-entanglement manifold.
The results shown in \fref{fig: ising scaling} refer to a specific disordered instance of the Ising spin glass. Still, we verified their validity on different realizations of the random quenched disorder.
% typical for a random disorder realization of the Hamiltonian. 

The MPS-TDVP approach enables the efficient estimate of the instantaneous gap through the effective Hamiltonian (see \aref{app: spin glass}).
%as the minimal gap of the effective Hamiltonian used to propagate the cores of the matrix product state. 
%
In \fref{fig: ising scaling}~c), we show the convergence of the effective gap with increasing bond dimension. The exact gap at the protocol's beginning and end is well approximated, even with a small bond dimension.
This is expected since the initial and final (target) ground states are product states.
In contrast, a correct estimation in the middle of the protocol requires higher bond dimension values. This overall trend reflects the overlap between the exact and the variational ground states (see black dotted line in panel \textit{a}, for the case of $D=1$).
Consistently, ground-state entanglement is large in the middle of the protocol, as shown in \fref{fig: ising scaling}~d).

In \fref{fig: ising stat}, we present the histogram of the norm $\lVert\delta\psi(s)\rVert_2$ for $s=1$, representing the distance of the final variational state from the exact target ground state, for 100 random realizations of the Hamiltonian in \eref{eq: SG}.
We observe that for $D=1$, $D=2$, $D=4$, and $D=8$, approximately 20\%, 8\%, 2\%, and 1\% of the instances do not converge accurately to the target ground state for $T=1600$, i.e., the final norm $\lVert \delta\psi(1)\rVert_2>0.1$. 
This is related to spurious first-order phase transitions for the variational parameters, which appear more likely for smaller values of $D$, namely for lower-dimensional manifolds (see \aref{app: spin glass} for details).
However, in all cases, we were able to recover the correct ground state with an additional Density Matrix Renormalization Group (DMRG) run 
at the end of the protocol.
Even if the final variational state has a small overlap with the target state, it may serve as a warm-start for DMRG, leading to better convergence than random initializations.
\begin{figure}[!htb]
    \centering
    \includegraphics[width=0.9\columnwidth]{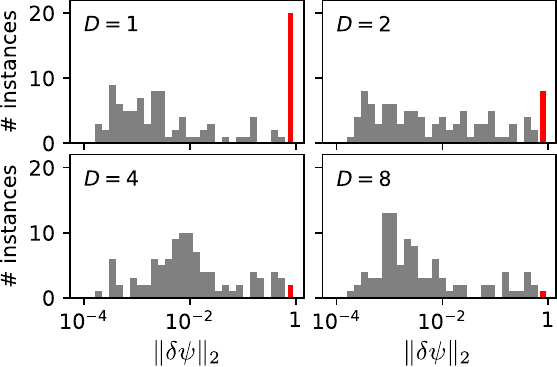}
    \caption{Histogram of the norm $\lVert\delta\psi(1)\rVert_2$ at the end of the annealing protocol with $D=1,2,4,8$ and $T=1600$.
    The histogram is computed from 100 random realizations of the final Hamiltonian $H_1$. We fixed $N=8$.
    The red bars correspond to instances with values larger than $0.1$. }
    \label{fig: ising stat}
\end{figure}

\paragraph{Conclusions and outlook.}

Our theorem states that, given a quantum adiabatic protocol with sufficiently slow evolution and a variational manifold, the time-dependent variational state remains close to the instantaneous variational ground state, provided the hypotheses are satisfied.
If the initial and final states lie within a classically simulable variational manifold, under the theorem's assumptions, this could open up a new promising computational paradigm:
%approach to obtain the target state: 
an efficient quantum-inspired classical algorithm performs the manifold evolution as in \fref{fig: manifold volution}, finally reaching the target state.

This approach has been proven valid 
%for classical or low-entangled initial and final states, such as solutions to QUBO problems.
%
%We investigated its performance 
for low-entanglement variational manifolds and various classical target Hamiltonians, e.g.\ for a spin glass, resulting in numerical outcomes consistent with the predicted scaling.
Selecting an appropriate manifold, such as the MPS manifold with larger bond dimension, can enhance the final results and decrease the annealing time needed to find the target state.

The proposed variational adiabatic computing scheme represents a novel classical paradigm, distinct from stochastic annealing and simulated quantum annealing, which rely on Monte Carlo sampling. 
We anticipate that its ultimate effectiveness in solving complex classical optimization problems and other applications may be linked to two natural extensions of our work.
Firstly, our variational adiabatic strategy could be extended to imaginary or complex time. This would result in an effective cooling during the adiabatic evolution, which can further improve the results. Various combinations with DMRG are also possible.
Secondly, we plan to examine different variational manifolds, such as PEPS~\cite{Cirac_2021} or neural-network quantum states~\cite{Carleo_2017, Schmitt_2020, PhysRevB.106.L081111, VNA}.

These advances could enable tackling more challenging problems and comparing the performance on larger instances with state-of-the-art classical methods or experimental implementations of quantum annealing. Furthermore, the variational quantum adiabatic paradigm may be adapted to finding excited states of the energy spectrum, extending its application beyond ground-state search.
Some important practical benefits compared to a fully quantum adiabatic evolution are already evident, e.g.\ access to the exact variational quantum state, which enables boosting an exceedingly small probability of finding the correct solution (such as $\approx 10^{-12}$) to almost one: this can be accomplished via DMRG on the final variational state, or similar methods. 

From a theoretical perspective, the challenge of identifying a practical sufficient condition for the applicability of the variational adiabatic theorem remains open. 
Nonetheless, our methods could be readily applied to various contexts traditionally studied within the framework of quantum annealing. Examples include the generalized Landau-Zener problem~\cite{brundobler1993s}, mechanisms to circumvent slowdown for adiabatic quantum computation~\cite{albash2018adiabatic}, 
and the non-adiabatic evolution~\cite{crosson2014different}. Moreover, it would be possible 
%to search for efficient strategies with more complicated annealing protocols (e.g.\, by adding a ``catalyst'' Hamiltonian~\cite{farhi2002quantum,albash2018adiabatic}) or 
to employ the manifold simulations to approximately estimate the position of the minimum adiabatic gap. 
Therefore, variational adiabatic evolution appears to be a novel and robust classical baseline for benchmarking quantum protocols and a valuable tool for developing novel classical algorithms inspired by quantum methods.

\paragraph{Acknowledgments.}
\begin{acknowledgments}
BZ was funded by ARIS project J1-2480 and ARIS research program P2-0209 Artificial Intelligence and Intelligent Systems (2022 – 2024).  GP acknowledges financial support from the QuantERA II Programme STAQS project that 
has received funding from the European Union’s H2020 research and innovation program under Grant Agreement No 101017733. GL was supported by ANR-22-CPJ1-0021-01 and ERC Starting Grant 101042293
(HEPIQ).
MC acknowledges support from the PNRR MUR project PE0000023-NQSTI and by the PRIN 2022 (2022R35ZBF) - PE2 - “ManyQLowD”. 
Computational resources were provided by SLING – a Slovenian national supercomputing network.
\end{acknowledgments}

\bibliography{sample}

\appendix

\section{Variational ground-state quantum adiabatic theorem}\label{app: vqa}
\begin{theorem}
Assume a time-dependent Hamiltonian $H(t)$ whose instantaneous ground state $\ket{\psi_0(t)}$ at time $t=0$ lies in the variational manifold $\mathcal{M}$. Introduce the rescaled time $s=t/T$, where $T$ is the annealing time determining the inverse rate of change of $H(t)$. 
Consider a variational state $\ket{\psi_x^{\rm VS}(s)}\in\mathcal{M}$ whose time-dependence is fully determined by the variational parameters $x=x(s)$. Starting in the initial ground state $\ket{\psi_x^{\rm VS}(0)}=\ket{\psi_0(0)}$, the variational parameters are evolved with the (Lagrangian action) variational principle
\begin{equation}
    \frac{1}{T}\frac{\rm d}{{\rm d}s}x  = \mathcal{X}, \quad \mathcal{X}= P_{x}(-\ii H(s)\left|\psi_x^{\rm VS}(s)\right\rangle)\,,
    \label{eq:vp action}
\end{equation}
where $P_{x}$ is the projector on the tangent manifold at $\ket{\psi_x^{\rm VS}(s)}$.
The resulting dynamics on the manifold can be decomposed as $\ket{\psi_x^{\rm VS}(s)}=\ket{\psi_{x_0}^{\rm VS}(s)}+\ket{\delta\psi(s)}$, where $\ket{\psi_{x_0}^{\rm VS}(s)}$ denotes the instantaneous variational ground state characterized by $x_0=x_0(s)$ .
If the linearized symplectic generator $K(s)$ defined by 
\begin{align}
K^\mu_{~\nu}(s)= \partial_{x^{\nu}}\mathcal{X}^\mu\Big\lvert_{x=x_0(s)},
\end{align}
satisfies the adiabatic conditions for $\eta$-pseudo-Hermitian quantum evolution, then 
\begin{align}
\lVert\delta\psi(s)\rVert_2=\mathcal{O}\left(\frac{1}{T}\right) \qquad \text{for} \qquad s \in [0,1]\,.
\end{align}
More precisely, if $\delta x(s) = x(s) - x_0(s)$, we obtain
\begin{align}
\lVert\delta x (s)\rVert_{\eta(s)} \sim \kappa \left(\frac{1}{T}\right) \qquad \text{for} \qquad s \in [0,1]\,,
\end{align}
where the norm is calculated with respect to the pseudo-inner product and
\begin{align}
    \label{eq: app bound}
    \kappa = \frac{2\max_{s,s',j} \left| {\rm e}^{-\ii\big(\Gamma_{j} (s)-\Gamma_{j} (s')\big)} \right|\max_s\lVert\dot{x}_0(s) \rVert_{\eta(s)}}{\min_{s,i}|\omega_i(s)|},
\end{align}
with $\Gamma_j(s)$ denoting the generalized geometric phase. 
If the previous term does not diverge, the time-dependent variational state closely follows the instantaneous variational ground state.
\end{theorem}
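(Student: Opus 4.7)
The plan is to linearize the variational dynamics around the instantaneous variational ground state $x_0(s)$ and then invoke an adiabatic-type theorem for the $\eta$-pseudo-Hermitian generator $K(s)$. First I would observe that, since $x_0(s)$ minimizes the variational energy on $\mathcal{M}$ at each $s$, the tangent-space stationarity condition forces $P_{x_0}(H(s)\ket{\psi_{x_0}^{\rm VS}(s)})=0$, i.e.\ $\mathcal{X}(x_0(s),s)=0$. Writing $x(s)=x_0(s)+\delta x(s)$, substituting in \eref{eq:vp action}, and Taylor-expanding to first order yields
\begin{equation}
\frac{{\rm d}}{{\rm d}s}\delta x(s) = T\, K(s)\,\delta x(s) - \dot{x}_0(s) + \mathcal{O}(\lVert\delta x\rVert^2),
\end{equation}
an inhomogeneous linear ODE with fast free evolution generated by $TK(s)$ and slow driving $-\dot{x}_0(s)$.

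Next I would use the $\eta$-pseudo-Hermitian structure of $K(s)$, inherited from the symplectic/K\"ahler geometry of the manifold: at each $s$ there is a biorthogonal eigendecomposition $K(s)\,v_j(s)=\ii\omega_j(s)\,v_j(s)$, with eigenvalues forming purely imaginary pairs $\pm\ii\omega_j(s)$ that reflect the elliptic character of fluctuations around the variational ground state. I would then expand the solution in the adiabatic frame, $\delta x(s)=\sum_j c_j(s)\,v_j(s)\,\mathrm{e}^{-\ii T\int_0^s\omega_j(s')\,{\rm d}s'-\ii\Gamma_j(s)}$, with $\Gamma_j(s)$ the generalized (possibly complex) geometric phase chosen to kill the diagonal Berry connection $\langle v_j,\dot{v}_j\rangle_\eta$. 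Substituting into the ODE yields slow equations for the amplitudes $c_j(s)$ whose source terms are projections of $\dot{x}_0(s)$ on $v_j$, multiplied by the rapidly oscillating factor $\mathrm{e}^{\ii T\int_0^s\omega_j+\ii\Gamma_j}$.

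A single integration by parts on each such source then pulls down a factor $1/(T\omega_j)$, and the boundary contributions produce precisely the combination $\mathrm{e}^{-\ii(\Gamma_j(s)-\Gamma_j(s'))}$ appearing in \eref{eq: app bound}. Summing contributions, bounding the geometric phases and the slowness $\lVert\dot{x}_0(s)\rVert_{\eta(s)}$ by their maxima, and dividing by $\min_i|\omega_i|$ gives $\lVert\delta x(s)\rVert_{\eta(s)}\le\kappa/T$ uniformly for $s\in[0,1]$. Finally, this estimate translates into $\lVert\delta\psi(s)\rVert_2=\mathcal{O}(1/T)$ via the Lipschitz embedding of the tangent space into the Hilbert space, which is uniform on any compact subset of $\mathcal{M}$ where the parametrization is non-degenerate.

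The hard part will be a rigorous treatment of the pseudo-Hermitian adiabatic theorem itself. Because $K(s)$ is not self-adjoint in the ordinary inner product, the $v_j(s)$ are only biorthogonal and the phases $\Gamma_j(s)$ can acquire non-vanishing imaginary parts, so $\lvert\mathrm{e}^{-\ii(\Gamma_j(s)-\Gamma_j(s'))}\rvert$ is not automatically $\mathcal{O}(1)$; the theorem's hypothesis essentially reduces to the requirement that this factor, together with $\kappa$, stays finite. Two further delicate points are the self-consistency of the linearization —- one must verify that $\lVert\delta x\rVert$ remains small enough over $s\in[0,1]$ that the neglected quadratic terms do not destabilize the bound -— and the behavior near points where the variational gap $\omega_j(s^\ast)$ closes, as at the mean-field critical point in the LMG example: there the assumptions fail, the $1/T$ bound breaks down, and one observes the degraded $1/\sqrt{T}$ scaling reported in the main text.
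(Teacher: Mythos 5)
Your proposal follows essentially the same route as the paper's proof: linearize the TDVP flow around $x_0(s)$ using the stationarity $\mathcal{X}(x_0(s))=0$, solve the resulting inhomogeneous linear ODE by variation of constants in the biorthogonal eigenbasis of the pseudo-Hermitian generator $\ii K(s)$, invoke the pseudo-Hermitian adiabatic approximation, integrate by parts to extract the $1/(T\omega_j)$ factor, and map $\lVert\delta x\rVert_{\eta}$ back to $\lVert\delta\psi\rVert_2$ through the manifold metric (which the paper does explicitly via the eigenvalues of $\mathbf{g}+\ii\boldsymbol{\omega}$ and of $\eta$). Your closing remarks on the possibly complex geometric phase, the self-consistency of the linearization, and the gap closing at the LMG critical point correctly identify the same caveats the paper acknowledges.
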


\begin{proof}
Let us consider the time evolution of the difference between the parameters of the time-dependent variational state and those of the instantaneous variational ground state
\begin{align}
\label{eq: proof_delta_x}
    \delta x(s) = x(s) - x_0(s).
\end{align}
The equations of motion for $\delta x(s)$ are
\begin{align} \label{eq: tdvp chi}
    \frac{1}{T}\partial_s\left(x_0(s) + \delta x(s)\right) = \mathcal{X}\big(x_0(s)+\delta x(s)\big),
\end{align}
Initially, by hypothesis, we have $\delta x(0)=0$. 
Our strategy is to show that the pseudo-norm of the difference $\lVert\delta x (s)\rVert_{\eta(s)} $ remains small during the adiabatic evolution. 
%Our strategy is to show that the norm of the difference $\delta x(s)$ remains small during the adiabatic evolution. 
%We assume that $\lVert\delta x(s)\rVert_2\ll\lVert x_0(s)\rVert_2$ 

We start by linearizing the dynamics in \eref{eq: tdvp chi} in a local neighborhood of $x_0(s)$ 
\begin{align}
    \frac{1}{T}\partial_s\left(x_0(s) + \delta x(s)\right) \approx \mathcal{X}(x_0(s)) + K(s) \delta x(s),
\end{align}
where $K^{\mu}_{~\nu}(s) = \partial_{x^\nu}\mathcal{X}^\mu\big|_{x=x_0(s)}$. Since the instantaneous variational ground state is a stationary point of the Lagrangian action variational evolution\cite{hackl2020geometry}, we have $\mathcal{X}(x_0(s))=0$ and 
\begin{align}
    \label{eq: ddx/ds}
    \partial_s \delta x(s) = T K(s) \delta x (s) - \partial_s x_0(s).
\end{align}
We find the solution of the above inhomogeneous first-order differential equation in terms of the fundamental matrix satisfying~\cite{notesode}
\begin{align}
    \label{eq: K diff eq}
    \partial_s U(s) = TK(s)U(s).
\end{align}
A formal solution to \eref{eq: K diff eq}  is determined by a time-ordered exponential 
\begin{align}
    U(s) = \mathcal{T}_{\leftarrow}{\rm e}^{\int_0^s T K(u){\rm d} u}.
\end{align}
We write $\delta x(s)$ as a sum of the homogeneous and the particular solution of \eref{eq: ddx/ds}. Due to the initial condition $\delta x(s=0)=0$, the homogeneous part of the solution vanishes since $\delta x _{\rm h}(s)=U(s)\delta x (0)=0$. The particular part, therefore, determines the final solution
\begin{align}
    \label{eq: delta x sol}
    \delta x (s)=\delta x _{\rm P}(s) = U(s)\int_0^s U^{-1}(u)\dot{x}_0(u){\rm d}u\,,
\end{align}
where we introduced the notation
$\dot{x}_0(s)=\partial_s x_0(s)$

To further simplify the expression \eref{eq: delta x sol}, we use the structure of the symplectic generator $K$, which is always diagonalizable and whose eigenvalues appear in conjugate imaginary pairs~\cite{hackl2020geometry}. Consequently, ${\rm i} K$ has real eigenvalues implying the existence of a strictly-positive definite metric operator $\eta$ such that \cite{mostafazadeh2010pseudo,mostafazadeh_pseudo-hermiticity_2002}:
\begin{align}
({\rm i}K)^\dag\eta=\eta ({\rm i}K).
\end{align}
The above relation implies that ${\rm i}K$ is Hermitian with respect to the pseudo-inner product 
\begin{align}\label{eq: pseudo product}
\braket{\psi|\phi}_\eta = \bra{\psi}\eta\ket{\phi}\,.
\end{align}
Therefore, \eref{eq: K diff eq} is a real-time Schr\"odinger equation with a pseudo-Hermitian Hamiltonian ${\rm i}K$~\cite{mostafazadeh2010pseudo, brody2013biorthogonal, mostafazadeh2020time}.
Although we shall adopt the standard Dirac notation, notice that the operator ${\rm i}K$ and the parameter vectors introduced in \eqref{eq: proof_delta_x} act on the variational manifold $\mathcal{M}$.
We can find the left $\{\ket{\tilde{\omega}_i(s)}\}$ and right $\{\ket{\omega_i(s)}\}$ eigenvectors of ${\rm i}K(s)$ with the corresponding eigenvalues $\omega_i(s)$, namely ${-\rm i}K(s)^\dag\ket{\tilde{\omega}_i(s)}=\omega_i(s)\ket{\tilde{\omega}_i(s)}$ and 
${\rm i}K(s)\ket{\omega_i(s)}=\omega_i(s)\ket{\omega_i(s)}$. The left and right eigenvectors are related via 
\begin{align}
\ket{\tilde{\omega}_i(s)}=\eta(s)\ket{\omega_i(s)}
\end{align}
and satisfy the bi-orthonormality condition 
\begin{align}
\braket{\omega_i(s)|\tilde{\omega}_j(s)}&=\braket{\omega_i(s)|\eta(s)|\omega_j(s)}\\ \nonumber&=\braket{\omega_i(s)|\omega_j(s)}_{\eta(s)}=\delta_{i,j}.
\end{align}

Recently, a pseudo-Hermitian generalization of quantum mechanics has been worked out~\cite{brody2013biorthogonal,mostafazadeh2010pseudo}. Since our linearized homogeneous equation \eref{eq: K diff eq} is exactly a Schr\"odinger equation for the pseudo-Hermitian Hamiltonian ${\rm i}K$ we apply the adiabatic theorem for pseudo-Hermitian quantum dynamics~\cite{nenciu1992adiabatic,mostafazadeh2010pseudo, mostafazadeh2020time, silberstein2020berry}. 
We assume 
\begin{align}
\frac{1}{T}\max_{s\in \left[0,1\right]}  \frac{\left|\bra{\tilde{\omega}_i(s)} \partial_s K(s)\ket{\omega_j(s)}\right|}{ 
\left| \omega_i(s) - \omega_j(s) \right|^2} \ll 1,\quad \forall i\neq j.
\end{align}
According to the pseudo-Hermitian quantum adiabatic theorem~\cite{mostafazadeh2020time, silberstein2020berry}, we have
\begin{align}
\label{eq:adiabatic K}
    U(s)\ket{\omega_j(0)}\approx\,& {\rm e}^{\ii T \Phi_{j}(s)} \ {\rm e}^{\ii \Gamma_{j} (s)} \ket{\omega_j(s)},\\ \nonumber
    U^{-1}(s)\ket{\omega_j(s)}\approx\,& {\rm e}^{-\ii T \Phi_{j}(s)} \ {\rm e}^{-\ii \Gamma_{j} (s)} \ket{\omega_j(0)},
\end{align}
where 
\begin{align*}
\Phi_{j}(s) =& \int_0^{s}\omega_j(u){\rm d} u, \\
\Gamma_{j}(s) =&\frac{1}{2}\int_0^{s}\left(\braket{\dot{\tilde{\omega}}_j(u)|\omega_j(u)} - \braket{\omega_j(u)|\dot{\tilde{\omega}}_j(u)}\right){\rm d}u .
\end{align*}
The key difference from the standard Hermitian adiabatic theorem is that the geometric phase $\Gamma_{j} (s)$ does not vanish for open paths and is not necessarily real. Consequently, it can contribute to the global phase and change the norm of the eigenstates. However, the generalized geometric phase does not depend on time, hence it contributes only a prefactor to the final result. 

Since the left and right eigenvectors satisfy the completeness relation $\mathds{1}=\sum_j \ket{\omega_j(s)}\bra{\tilde{\omega}_j(s)}$, we can expand the inhomogeneous vector term as
\begin{align}
\label{eq:x0 expansion}\ket{\dot{x}_0(s)}=\sum_{j}\ket{\omega_j(s)}\braket{\tilde{\omega}_j(s)|\dot{x}_0(s)}\,,
\end{align}
where we exploit the bra-ket notation.
After inserting the expansion \eref{eq:x0 expansion} into the expression in \eref{eq: delta x sol} and using the pseudo-Hermitian adiabatic approximation, we find
\begin{align}
    \ket{\delta x(s)}& \approx \sum_j\int_0^{s}{\rm e}^{{\rm i}T(\Phi_j(s)-\Phi_j(u))} g_j(s,u) {\rm d}u \, \ket{\omega_j(s)},
\end{align}
where
\begin{align}
g_j(s,u) = {\rm e}^{\ii( \Gamma_{j} (s)-\Gamma_{j} (u))}\braket{\tilde{\omega}_j(u)|\dot{x}_0(u)}.
\end{align}

We aim to bound the norm $\lVert\delta x(s)\rVert_\eta$. Due to the bi-orthogonality relation, by using the pseudo-inner product definition in \eref{eq: pseudo product}, we obtain
\begin{align}\label{eq: dx adiabatic}
    \braket{\delta x(s)|\delta x(s)}_{\eta(s)} \approx % \approx 
    \sum_j\left|\int_0^{s}{\rm e}^{-{\rm i}T\Phi_j(u)}g_j(s,u) {\rm d}u \right|^2 \,.
\end{align}
We assume that $\omega_i(s)$, which represent the approximations of the excitation energies over the ground state, do not vanish. Now we can evaluate the integral in \eref{eq: dx adiabatic} using integration by parts and neglecting the second part since it is asymptotically smaller for large $T$
\begin{align*}
    \lVert\delta x (s)\rVert_{\eta(s)}^2 \approx \sum_{j}\left\lvert \frac{\ii}{T}\frac{g_j(s,u)}{\omega_j(u)}{\rm e}^{-\ii T \int_0^{u}\omega_j(u'){\rm d} u'}\Big\rvert_{u=0}^{u=s}\right\rvert^2.
\end{align*}
After some algebra, we find the following upper bound for $\lVert\delta x (s)\rVert_{\eta(s)}^2$:
\begin{equation}
   %\lVert\delta x (s)\rVert_{\eta(s)}^2  \leq 
   \left(\frac{2\max_{s,s',j} \left| {\rm e}^{-\ii( \Gamma_{j} (s)-\Gamma_{j} (s'))} \right|\max_s\lVert\dot{x}_0(s) \rVert_{\eta(s)}}{T\min_{s,i}|\omega_i(s)|}\right)^2.
\end{equation}

In cases where $\omega_i(s)$ vanishes, we would have a degenerate ground state, and the assumptions of the pseudo-Hermitian adiabatic theorem would not be valid.

In conclusion, let us write the time-dependent variational state as
$\ket{\psi_x^{\rm VS}(s)}=\ket{\psi_{x_0}^{\rm VS}(s)}+\ket{\delta\psi(s)}\,,$
where $\ket{\psi^{\rm VS}_{x_0}(s)}$ is the  variational ground-state.
If the variational parameters $x$ are close to the ground-state variational parameters $x_0$, we can perform a first-order expansion centered on $x_0$, leading to
\begin{align}
\ket{\delta \psi} =\sum_\mu \partial_{x^\mu} \ket{\psi_x^{\rm VS}}|_{x=x_0} \delta x_j\,,
\end{align}
where we dropped the $s$ dependence for conciseness.
The inner product of partial derivatives defines the real-valued metric $\bf g$ and the symplectic form $\boldsymbol{\omega}$ on the variational manifold~\cite{hackl2020geometry}
\begin{align}
\braket{\partial_{x^\mu}\psi_x^{\rm VS}|\partial_{x^\nu}\psi_x^{\rm VS}} = \frac{1}{2} (g_{\mu,\nu} + {\rm i}\omega_{\mu,\nu}).
\end{align}
Finally, by restoring the $s$ dependence, we have
\begin{equation}
    \begin{aligned}
    \lVert\delta\psi(s)\rVert_2 ^2 &=\frac{1}{2}\,x(s)^T \big({\bf g}(s)+{\rm i}\boldsymbol{\omega}(s)\big)x(s) \\ 
    &\leq  \frac{\vert \lambda_{\rm max}(s) \vert}{2} \lVert\delta x(s)\rVert_2^2 \\
    &\leq \frac{\vert \lambda_{\rm max}(s) \vert}{ 2\Lambda_{\rm min}(s)}  \lVert\delta x (s)\rVert_{\eta(s)}^2 ,
\end{aligned}
\end{equation}
where $\lambda_{\rm max}$ is the eigenvalue with the maximum modulus of the Hermitian matrix $({\bf g}+{\rm i}\boldsymbol{\omega})$, and $\Lambda_{\rm min}$ is the minimum eigenvalue of the positive definite metric operator $\eta$.
If the assumptions of the theorem are verified, this leads to $\lVert\delta\psi(s)\rVert_2=\mathcal{O}(1/T)$, for $s \in [0,1]$.

\end{proof}

\section{Maximally entangled model}\label{app: 2-partite model}
In this section, we generalize the two-qubit model discussed in the main text.
Let us consider a bipartite system where each subsystem $A$ and $B$ have a local 
Hilbert space $\mathds{C}^{N+2}$, with a local basis $\ket{i}$, $i=0,1,\ldots, N+1$. 
The annealing Hamiltonian of the composite model consists of three terms $H(s)=(1-s)H_0+sH_1+ s(1-s)H_2$, with $s=t/\tau$ as usual. The first and the second terms act nontrivially only on the first two states $\{\ket{0}, \ket{1}\}$ of each subsystem:
\begin{equation*}
    \begin{aligned}
        H_0 =& -(\ket{0}\bra{1}+\ket{1}\bra{0})_A\otimes\mathds{1}_B-\mathds{1}_A\otimes(\ket{0}\bra{1}+\ket{1}\bra{0})_B \\
    H_1 =& (\ket{0}\bra{0}-\ket{1}\bra{1})_A\otimes (\ket{0}\bra{0}-\ket{1}\bra{1})_B - \\
    &2\big[(\ket{0}\bra{0}-\ket{1}\bra{1})_A\otimes\mathds{1}_B+\mathds{1}_A\otimes(\ket{0}\bra{0}-\ket{1}\bra{1})\big]_B
    \end{aligned}
\end{equation*}
The term $H_2$ is a projector on a maximally entangled state $\ket{\Psi^+}=\frac{1}{\sqrt{N+2}}\sum_{i=0}^{N+1}\ket{i}$, namely
\begin{align}
    H_2 = - A(N+2)\ket{\Psi^+}\bra{\Psi^+},
\end{align}
where $A$ is a positive constant.
Notice how these operators act equivalently as in the two-qubit case upon restricting to the subspace spanned by $\{\ket{0}, \ket{1}\}$ of each subsystem $A$, $B$.
Moreover, if $N=0$, this model corresponds exactly to the two-qubit model discussed in the main text. 
We can consider a variational ansatz restricted to the relevant two-state subspaces mentioned above:
%Since the initial and the final Hamiltonian have ground states that are product states where both systems are in the same state in the subspace spanned by $\ket{1}$, $\ket{2}$ we consider the following variational ansatz
\begin{equation}
    \label{eq: app me ansatz}
    \begin{aligned}
    \ket{\psi_{\theta,\phi}^{\rm VS}} &= \ket{\psi_{\theta,\phi}}_A\otimes \ket{\psi_{\theta,\phi}}_B \\ %\label{eq:app psi loc}
    \ket{\psi_{\theta,\phi}} &= \cos(\theta/2)\ket{0} + \sin(\theta/2)\mathrm{e}^{-i\phi}\ket{1}.
\end{aligned}
\end{equation}
The tangent space basis vectors at $(\theta,\phi)$ are given by
\begin{align}    
\label{eq:app tangent V}
\ket{V_j}=&\ket{v_j}\otimes\ket{\psi_{\theta,\phi}}+\ket{\psi_{\theta,\phi}}\otimes\ket{v_j},\quad j=1,2\\ \nonumber
\ket{v_1} =& \frac{1}{2}\left(-\sin \left(\theta/2\right)\ket{0}+e^{\im \phi } \cos
   \left(\theta/2\right)\ket{1}\right), \\ \nonumber
   \ket{v_2} = & \frac{\rm i}{2}\left(-\sin \left(\theta/2\right) \sin (\theta )\ket{0}+ e^{\im
   \phi } \cos \left(\theta/2\right) \sin (\theta ) \ket{1}\right).
\end{align}
The metric $\mathbf{g}_{i,j}=2{\rm Re}\braket{V_i|V_j}$ and the symplectic form $\mathbf{\omega}_{i,j}=2{\rm Im}\braket{V_i|V_j}$ on the manifold are then given by (see also Ref \cite{hackl2020geometry})
\begin{align}
    \mathbf{g}=\left(\begin{matrix}
        1&0\\
        0&\sin\theta
    \end{matrix}\right), \quad
    \mathbf{\omega}=\left(\begin{matrix}
        0&\sin\theta\\
        -\sin\theta&0
    \end{matrix}\right).
\end{align}
By defining $\mathbf{G}=\mathbf{g}^{-1}$ as the inverse of the metric, it is easy to show that $\mathbf{J}=-\mathbf{G}\mathbf{\omega}$ is a complex structure satisfying $\mathbf{J}^2=-\mathds{1}$. Hence, the variational manifold given by the ansatz in \eref{eq: app me ansatz} is a K\"ahler manifold, which leads to the equivalence between different variational principles~\cite{hackl2020geometry}. 

Given the above definitions, we can calculate the equations of motion for the variational parameters $x=(\theta,\phi)$  
\begin{align}\label{eq:app variational eq}
    \frac{{\rm d}x_i}{{\rm d}s}=-2 T \sum_{j=1,2}\mathbf{G}_{i,j}{\rm Re}\braket{V_j|{ \im}H(s)|\psi_{\theta,\phi}^{\rm VS}},
\end{align}
which simplify to
\begin{align*}\nonumber
    \frac{{\rm d}\theta}{{\rm d}s} = & -2 (s-1) T \sin (\phi (s)) \Big(A\, s \sin (\theta (s)) \cos (\phi (s))+1\Big),\\ 
    \frac{{\rm d}\phi}{{\rm d}s} = & T \Big(s \cos (\theta (s)) \big(-A (s-1) \cos (2 \phi (s)) + \\ \nonumber& A (s-1)+2\big)+2 h s-2 (s-1) \cot (\theta (s))
   \cos (\phi (s))\Big).
\end{align*}
The above equations do not depend on the local Hilbert space dimension $N+2$. Therefore, we can engineer an arbitrarily high entanglement of the ground state at $s=1/2$ by choosing a large value for the parameter $A$. Still, our variational evolution constrained to a low-entanglement manifold (here, with zero entanglement) follows the instantaneous variational ground state. Since the target ground state of $H_1$ has zero entanglement, the variational evolution successfully converges to it at the end of the protocol. 

These results are shown in \fref{fig: nbit A}~a), where we plot the norm $\lVert\delta\psi(s)\rVert_2$ for $s=1$ (solid black line), a measure of the distance between $\ket{\psi_0(1)}$ (the target ground state of $H_1$) and $\ket{\psi^{\rm VS}(1)}$ (the variational state at the end of the annealing), as a function of $A$. 
In the same panel, the dashed orange line represents the norm of the difference between the exact and the variational ground state in the middle of the protocol, i.e.\ 
$\lVert \ket{\psi^{\rm VS}_0(s)}-\ket{\psi_0(s)} \rVert_2$ for $s=0.5$. 
In panel~b), we see the entanglement entropy of the exact ground state $\ket{\psi_0(s)}$, also at intermediate $s=0.5$.
The qualitative picture is consistent with the discussion presented in the main text for the two-qubit model.
%it decreases with increasing $A$ 
Increasing values of $A$ correspond to higher entanglement entropy generation during the full adiabatic evolution.
This leads to a larger difference between the exact and the variational instantaneous ground states at intermediate times.
Nonetheless, the variational adiabatic evolution on the product-state manifold successfully converges to the target ground state. The final values of $\lVert\delta\psi(1)\rVert_2$ could be lowered by increasing the annealing time $T$.
\begin{figure}[!htb]
    \centering
    \includegraphics[width=0.7\columnwidth]{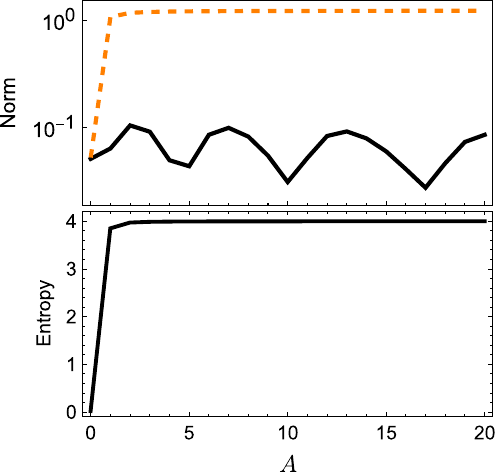}
    \caption{\textit{a)} 
    The final norm $\lVert\delta\psi(s)\rVert_2$ at $s=1$ (solid black line)
    and $\lVert \ket{\psi^{\rm VS}_0(s)}-\ket{\psi_0(s)} \rVert_2$ for $s=0.5$ (dashed orange line), as a function of $A$. \textit{b)} Half-system entanglement entropy of the exact ground state in the middle of the protocol ($s=0.5$). The annealing time is fixed to $T=10$ and $N=14$.}
    \label{fig: nbit A}
\end{figure}

\section{LMG model}\label{app: LMG}
In this section, we discuss the Lipkin-Meshkov-Glick model, the properties of the variational manifold, and derive the variational equations of motion.

The annealing Hamiltonian is $H(s)=(1-s)H_0+sH_1$ with
\begin{align}
    H_0 & = -\sum_{j=1}^N\sx_j, \\\nonumber
    H_1^{\text{LMG}} &= - \frac{1}{N} \Biggl(\sum_{i=1}^N \sigma_i^z \Biggr)^2.
\end{align}
As reported in the manuscript, we consider a homogeneous product-state variational manifold defined by the ansatz
\begin{align}
\label{eq: var_state_LMG}
    \ket{\psi^{\rm VS}_{\theta,\phi}}= \ket{\psi_{\theta,\phi}}^{\otimes N},
\end{align}
where $\ket{\psi_{\theta,\phi}}$ is defined in \eref{eq: 2-qubit psi}. The tangent vectors $\ket{V_{1,2}}$ are a generalisation of \eref{eq:app tangent V}, namely
\begin{align}
    \ket{V_j}=&\sum_{k=1}^N\ket{\tilde{V}_j^k},\\ 
\end{align}
where $\ket{\tilde{V}_j^k}$ is obtained from $\psi^{\rm VS}_{\theta,\phi}$ in \eqref{eq: var_state_LMG} by replacing the $k-$th vector $\ket{\psi_{\theta,\phi}}$ in the tensor product by $\ket{v_j}$. The metric $\mathbf{g}_{i,j}=2{\rm Re}\braket{V_i|V_j}$ and the symplectic form $\mathbf{\omega}_{i,j}=2{\rm Im}\braket{V_i|V_j}$ on the manifold are then given by 
\begin{align}
    \mathbf{g}=\frac{N}{2}\left(
\begin{array}{cc}
 1 & 0 \\
 0 & \sin ^2(\theta ) \\
\end{array}
\right),
\quad \mathbf{\omega}=\frac{N}{2}\left(
\begin{array}{cc}
 0 & \sin (\theta ) \\
 \sin(\theta)& 0 \\
\end{array}
\right).
\end{align}
The manifold is a K\"ahler manifold, since $\mathbf{J}=-\mathbf{G}\mathbf{\omega}$ is a complex structure satisfying $\mathbf{J}^2=-\mathds{1}$, where $\mathbf{G}$ is again the inverse of the metric, $\mathbf{G}=\mathbf{g}^{-1}$. The variational equation of motion \eref{eq:app variational eq} for $x=(\theta,\phi)$ simplify to
\begin{align}
    \frac{{\rm d} \theta}{{\rm d}s} =& 2 T (s-1) \sin (\phi (s)), \\ \nonumber
    \frac{{\rm d} \theta}{{\rm d}s} =& 2T \cos (\theta (s))\left(
    \frac{2 (n-1) s}{n}+\frac{(s-1) \cos (\phi (s))}{\sin (\theta (s))}
    \right).
\end{align}
The variational manifold can be conveniently represented with a Bloch sphere vector $S$, with 
\begin{align}
\label{eq: order parameter S}
    S^\alpha = 
    \frac{1}{N} \sum_{i=1}^N
    \braket{\psi_{\theta,\phi}|\sigma_i^{\rm \alpha}|\psi_{\theta,\phi}}, \quad \alpha={\rm x,y,z}.
\end{align}
The variational energy $E^{\rm VS}(\theta,\phi)=\braket{\psi^{\rm VS}_{\theta,\phi}|H(s)|\psi^{\rm VS}_{\theta,\phi}}$ is given by
\begin{align}
    E^{\rm VS}(\theta,\phi)=&-n (1-s) \sin (\theta ) \cos (\phi )\\ \nonumber
    &+(1-n) s \cos ^2(\theta )-1.
\end{align}
This system undergoes a mean-field phase transition at $s^*=\frac{n}{3 n-2}$ with the ground state parameters for $s>s^*$ given by
\begin{align}
    \theta_0(s)=\arcsin \left(\frac{n-n s}{2 s-2 n s}\right)+\pi,\quad \phi_0(s)=0\,
\end{align}
whereas for $s<s*$ the ground state is a paramagnetic state $\theta_0(s)=\pi/2$, $\phi_0(s)=0$.

The order parameter $S^z$ defined in 
%$s^{\rm %z}=\braket{\psi_{\theta,\phi}|S_z|\psi_{\theta,\phi%}}$ 
\eqref{eq: order parameter S}, 
displays a second order phase transition at $s=s^*$ with the mean-field critical exponent $1/2$, namely $S^{\rm z}(s) = |s-s^*|^{1/2}$, leading to a square root divergence $\lVert \partial_s S_0(s)\rVert_2 \propto |s-s^*|^{-1/2}$. Approximating $s-s^*\approx \Delta t/T$, where $\Delta t$ is the integration time step (independent of $T$), and using the bound \eref{eq: app bound}, we find
\begin{align}
    \lVert\delta S(s>s^*)\rVert_2=O\left(\frac{1}{\sqrt{T}}\right). 
\end{align}

\section{Spin glass model}\label{app: spin glass}

\begin{figure}[H]
    \centering
    \includegraphics[width=\columnwidth]{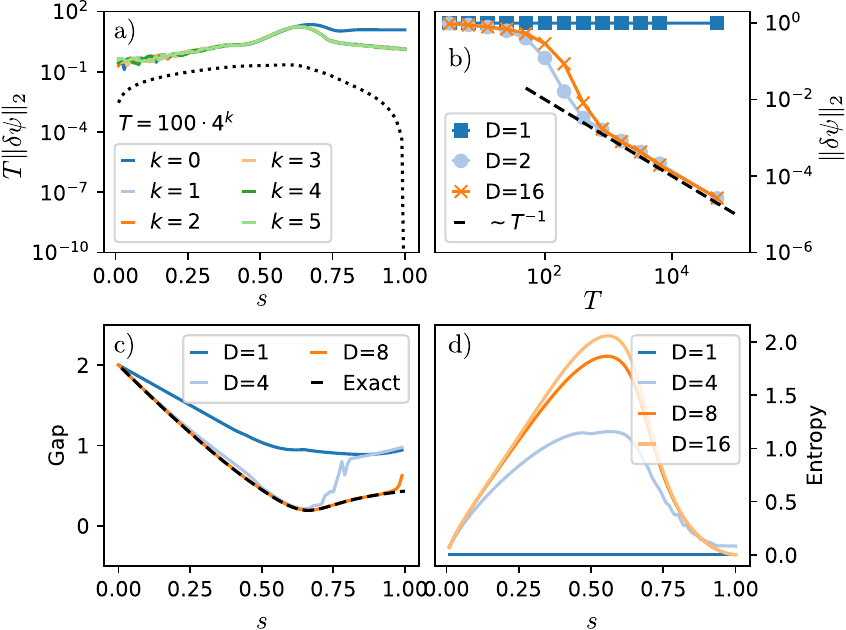}
    \caption{ 
    Convergence of the time-dependent variational state towards the variational ground state upon increasing the annealing time $T$ for a particular realization of the Ising spin glass model with $N=8$. \textit{a)} We fix the bond dimension to $D=2$ and show convergence of the norm $\lVert \delta\psi(s)\rVert_2$. The black dotted line corresponds to the norm of the difference between the exact ground state and the variational ground state.
    \textit{b)} 
    The plot of $\lVert \delta\psi(1)\rVert_2$, the norm of the difference between the final variational state and the target ground state, for increasing values of $D$.
    The classical target state belongs to the MPS manifold with $D=1$.
    The bond dimension $D=16$ corresponds to the exact evolution, and the dashed line highlights the predicted scaling $1/T$.
    \textit{c)} The spectral gap calculated from the effective Hamiltonian in the MPS-TDVP evolution for various bond dimensions $D$. The dashed line corresponds to the exact spectral gap. We set $T=3200$.
    \textit{d)} Entanglement entropy during the protocol with $T=3200$ and different bond dimensions.
    }
    \label{fig: app sg scaling 1}
\end{figure}

To address the spin glass model discussed in the main text, we utilize the matrix product state (MPS) variational manifold, which is a K\"ahler manifold~\cite{haegeman2014geometry}.
We use the inverse-free formulation of the Lagrangian action (or time-dependent) variational principle~\cite{haegeman2016unifying}. This iterative approach relies on calculating, for each site $j$, the effective local Hamiltonian $H_j^{\rm eff}$. The latter is defined by contracting the Hamiltonian matrix product operator with two copies of the MPS on all sites but site $j$. The local MPS tensor $A_j$ is then evolved by $\dot{A}_j=-{\rm i} H^{\rm eff}_jA_j$. 
The spectrum of the local effective Hamiltonian $H^{\rm eff}_j$ is used to estimate the low-lying energy eigenvalues of the full Hamiltonian. 
We remark that the eigenvalues obtained from the effective Hamiltonian $H^{\rm eff}_j$ are different from those obtained from the linearized variational map ${\rm i}K$ but are easier to calculate. Both approaches approximate the energy gap well if the variational ground state is close to the exact ground state.

Next, we present additional numerical results for the spin glass model. 
We first focus on two instances where the final variational state does not converge to the target ground state at $s=1$. In the first example, this happens only for bond dimension $D=1$ (see \fref{fig: app sg scaling 1}), while for $D\geq2$ 
the expected convergence is recovered. We can explain this by spurious first-order phase transitions appearing in low-dimensional manifolds. 

In \fref{fig: seed 11 theta}~\textit{a}), we plot one of the variational parameters as a function of the rescaled time $s=t/T$. The orange dashed line represents the variational parameter during the adiabatic evolution on the manifold with $D=1$. In contrast, the black dotted line shows the value of the same parameter for the instantaneous variational ground state. We see a discontinuity at a critical $s_c\approx0.47$, after which the two values no longer coincide. Consistently, the time-dependent variational energy shown in \fref{fig: seed 11 theta}~\textit{b)} (orange solid line) follows the variational ground-state energy (black dashed line) up to $s_c$ and starts to separate afterward. At larger bond dimensions $D>1$, this first-order transition disappears. 

\begin{figure}[H]
    \centering
    \includegraphics[width=\columnwidth]{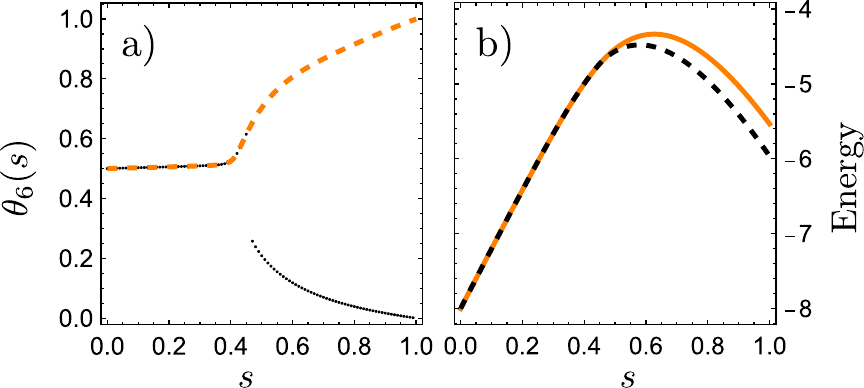}
    \caption{\textit{a}) The orange line shows the adiabatic time evolution of the variational parameter $\theta_6(s)$ on the manifold with $D=1$. The black dots correspond to the instantaneous variational ground state value of the same parameter. We observe a first-order phase transition at $s_c \approx 0.73$. \textit{b}) The orange line corresponds to the energy of the time-dependent variational state. The black line is the energy of the variational ground state. The energy of the time-dependent variational state starts to diverge from the exact variational ground state after the first-order transition.}
    \label{fig: seed 11 theta}
\end{figure}

In contrast, in the example shown in \fref{fig: app sg scaling 2}, even the exact adiabatic evolution ($D=16$) does not converge to the target ground state. In this case, we can explain the failure with the very small minimum gap (shown in \fref{fig: app sg scaling 2}~c)). 

\begin{figure}[!htb]
    \centering
    \includegraphics[width=\columnwidth]{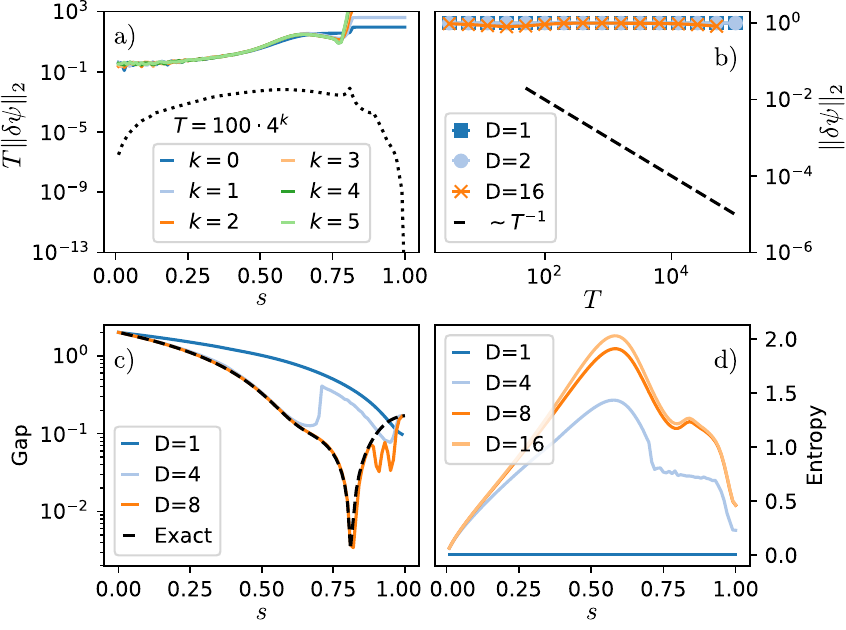}
    \caption{The same as in \fref{fig: app sg scaling 1}, but with a bond dimension $D=8$ for panel \textit{a}. The variational state does not converge to the variational ground state even for a large bond dimension $D=16$, corresponding to the exact solution.
    Due to a small minimum gap, both the adiabatic and variational adiabatic evolution fail to reach the target ground state. 
    }
    \label{fig: app sg scaling 2}
\end{figure}

In \fref{fig: app gap error}, we show the relation between the estimated minimum ground-state gap and the distance $\lVert\delta\psi(1)\rVert_2$ from the target ground state. In most cases, a small minimum ground-state gap is a good indicator of a large final error. Nevertheless, there is a significant number of outliers (shown in red in \fref{fig: app gap error}) for which this simple argument could not detect the failure of the variational approach. 

\begin{figure}[H]
    \centering
    \includegraphics[width=\columnwidth]{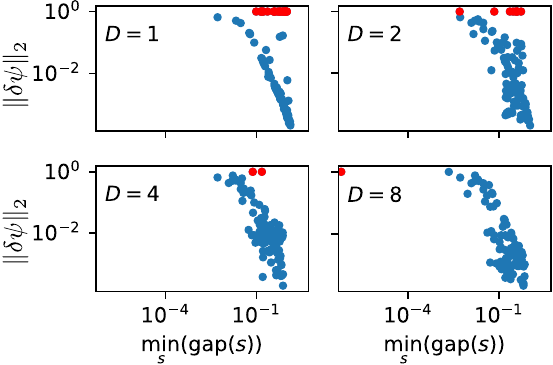}
    \caption{The norm $\lVert\delta\psi(1)\rVert_2$ as a function of the estimated ground-state gap (computed by using the effective local Hamiltonian with different bond dimensions). We show four values of the bond dimension $D=1,2,4,8$.}
    \label{fig: app gap error}
\end{figure}

\end{document}